\theoremstyle{definition}
\newtheorem{theorem}{Theorem}
\DeclareMathOperator*{\argmax}{argmax}
\def\BibTeX{{\rm B\kern-.05em{\sc i\kern-.025em b}\kern-.08em
    T\kern-.1667em\lower.7ex\hbox{E}\kern-.125emX}}
\begin{document}

\title{Single-bit Quantization's Capacity of Binary-input Continuous-output Channels \\
}

	\author{\IEEEauthorblockN{Thuan Nguyen}
		\IEEEauthorblockA{School of Electrical and\\Computer Engineering\\
			Oregon State University\\
			Corvallis, OR, 97331\\
			Email: nguyeth9@oregonstate.edu}

		\and
		\IEEEauthorblockN{Thinh Nguyen}
		\IEEEauthorblockA{School of Electrical and\\Computer Engineering\\
			Oregon State University\\
			Corvallis, 97331 \\
			Email: thinhq@eecs.oregonstate.edu}
	}

\maketitle

\begin{abstract}
We consider a channel with discrete binary input X that is
corrupted by a given continuous noise to produce a continuous-valued
output Y.  A quantizer is then used to quantize the
continuous-valued output Y to the final binary output Z. The goal is
to design an optimal quantizer $Q^*$ and also find the optimal input distribution $p^*_X$ that maximizes the mutual information $I(X; Z)$ between the binary input and the binary quantized output. A linear time complexity searching procedure is proposed. Based on the properties of the optimal quantizer and the optimal input distribution, we reduced the searching range that results in a faster implementation algorithm. Both theoretical and numerical results are provided to illustrate our method. 
\end{abstract}

\begin{IEEEkeywords}
Quantization, mutual information, channel capacity, partition, threshold, global optimization.
\end{IEEEkeywords}

\section{Introduction and Related Work}
\label{sec: related work}
	A communication system can be modeled by an abstract channel with a set of inputs at the transmitter and a set of corresponding outputs at the receiver. Often times the transmitted symbols (inputs) are different from the receiving symbols (outputs), i.e., errors occur due to many factors such as the physics of signal propagation through a medium or thermal noise. Thus, the goal of a communication system is to transmit the information reliably at the fastest rate.  The fastest achievable rate with vanishing  error for a given channel is defined by its channel capacity which is the maximum mutual information between input and output random variables.  For an arbitrary discrete memoryless channel (DMC) that is specified by a given channel matrix, the mutual information is a concave function of the input probability mass function \cite{cover2012elements}.  Thus, many efficient algorithms exist to find the channel capacity of DMC  \cite{blahut1972computation}.  Moreover, under some special conditions of the channel matrix, the closed-form expressions of channel capacity can be constructed.  \cite{nguyen2018closed}. It is worth noting that due to the simplicity of binary channels, the closed-form expressions of channel capacity of a binary channel always can be found as a function of the diagonal entries of the channel matrix \cite{moskowitz2010approximations}, \cite{moskowitz2009approximation},  \cite{silverman1955binary}. 
	
On the other hand, in many real-world scenarios, the input distribution is given, however, one has to design the channel matrix under the consideration of many factors such as power consumption, encoding/decoding speeds, and so on.  As a result, the mutual information is no longer a concave function of the input distribution alone but is a possibly non-concave/convex function in both input distribution and the parameters of the channel matrix.  Many advanced quantization algorithms have also been proposed over the past decade \cite{kurkoski2014quantization}, \cite{winkelbauer2013channel}, \cite{iwata2014quantizer}, \cite{he2019dynamic}, \cite{sakai2014suboptimal},  \cite{koch2013low} to find the optimal quantizer under the assumption of given the input distribution.   These algorithms play an important role in designing Polar code and LDPC code decoders \cite{romero2015decoding}, \cite{tal2011construct}.

Recently, there are many works on designing quantizers together with finding the optimal input distribution  such that the mutual information over both quantization parameters and the input probability mass function is maximized.  Although the mutual information is a concave function in the input pmf, it is not a convex/concave function in the quantization parameters i.e., thresholds. Therefore, many famous convex optimization techniques and algorithms for finding the global optimal solution are not applicable. To our best knowledge, this problem remains to be a hard and not well-studied  \cite{mathar2013threshold},  \cite{nguyen2018capacities},  \cite{alirezaei2015optimum}, \cite{singh2009limits}, \cite{kurkoski2012finding}.  In \cite{singh2009limits}, Singh et al. provided an algorithm for multilevel quantization,
which gave near-optimal results. In \cite{nguyen2018capacities}, the author proposed a heuristic near-optimal quantization algorithm.  However, this algorithm only works well when the SNR ratio of the channel is high.  For 1-bit quantization of general additive channels, Alirezaei and Mathar showed that capacity could be achieved by using an input distribution with only two support points \cite{mathar2013threshold}. In \cite{kurkoski2012finding}, the author gave a near-optimal algorithm to find the optimal of mutual information over both input distribution and quantizer variables for binary input and an arbitrary number of the quantized output, however, this algorithm may declare a failure outcome. 

In this paper, we provide a linear time complexity searching procedure to find the global optimal of mutual information between input and quantized output over both input distribution and quantizer variables. Based on the properties of the optimal quantizer and the optimal input distribution, the searching range is reduced that finally results in a faster implementation algorithm. Both the theoretical and numerical results are provided to justify our contributions. 


\section{Problem description}
\label{sec:problem description}
\begin{figure}
	\centering
	\includegraphics[width=0.9\linewidth]{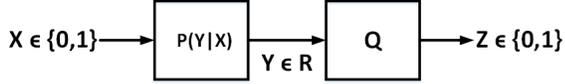}
	\caption{Channel model: a binary input $X=\{0,1\}$ is corrupted by continuous noise to result in continuous-valued $Y=\mathbb{R}$ at the receiver. The receiver attempts to recover $X$ by quantizing $Y$ into binary signal $Z=\{0,1\}$.}
	\label{fig:setup}
\end{figure}

We consider the channel shown in Fig. \ref{fig:setup} where the binary signals $ x \in X=\{0,1\}$ having $p_X=\{p_{x=0},p_{x=1}\}$ are transmitted and corrupted by a continuous noise source to produce a continuous-valued output $y \in \mathbb{R}$ at the receiver. Specifically, $y$ is specified by the a channel conditional density $p(y|x)$ where $p(y|x)$ models the distortion caused by noise. The receiver recovers the original binary signal $x$ using a quantizer $Q$ that  quantizes the received continuous-valued signal $y$ to $z \in Z=\{0,1\}$. 
Since $y\in \mathbb{R}$, the quantization parameters can be specified by a thresholding vector 

$$\textbf{h}= (h_1, h_2, \dots,h_n) \in \mathbb{R}^n,$$
with $h_1 < h_2 < \dots < h_{n-1} < h_n$, where $n$ is assumed a finite number.  Theoretically, it is possible to construct the conditional densities $p(y|x)$ such that the optimal quantizer might consist an infinite number of thresholds.  However, for a practical implementation, especially when the quantizer is implemented using a lookup table, then a finite number of thresholds must be used.  To that end, in this paper, we assume that the quantizer using an finite number of thresholds. Now, $\textbf{h}$ induces $n+1$ disjoint partitions: $$H_1 = (-\infty, h_1), H_2 = [h_1, h_2), \dots, H_{n+1} = [h_n,\infty).$$ 
Let $\mathbb{H} = \bigcup_{i \in odd} H_i$ and  $\bar{\mathbb{H}} = \bigcup_{i \in even} H_i$, then $\mathbb{H} \cap \bar{\mathbb{H}} =\emptyset$ and $\mathbb{H} \cup \bar{\mathbb{H}} =\mathbb{R}$. Thus, $\textbf{h}^*$ divides $\mathbb{R}$ into $n+1$ contiguous disjoint segments, each maps to either 0 or 1 alternatively
Without the loss of generality, we suppose that the receiver uses a quantizer $Q:Y \rightarrow Z$ to quantize $Y$ to $Z$ as:
\begin{equation}
 \label{eq:decoding}
Z = \begin{cases} 
	0  & \text{if  } Y \in \mathbb{H}, \\
	1  & \text{if  }  Y \in \bar{\mathbb{H}}.
\end{cases}
\end{equation}
Our goal is to design an optimal quantizer $Q^*$, specifically $\textbf{h}^*$ and also find the optimal input distribution $p_X^*$ that maximizes the mutual information $I(X;Z)$ between the input $X$ and the quantized output $Z$:
\begin{equation}
\label{eq:maximization}
h^*,p_X^*= \argmax_{h^*,p_X^*} I(X;Z).
\end{equation}
We note that the values of thresholds $h_i$'s, the number of thresholds $n$ and input distribution $p_X$ are the optimization variables. The maximization in (\ref{eq:maximization}) only assumes that the channel conditional density $p(y|x)$ are given. 
\section{Optimality conditions}
\label{sec: optimal conditions}
For convenience, we use the following notations:
\begin{enumerate}
	\item $p = (p_0,  p_1)$  denotes the probability mass function for the input $X$, with $p_0 =  P(X=0)$ and $p_1 = P(X=1)$.
	\item $q = (q_0, q_1)$ denotes probability mass function for the output $Z$, with  $q_0 =  P(Z=0)$ and $q_1 = P(Z=1)$.
	\item  $\phi_0(y)=p(y|x=0)$ and $\phi_1(y)=p(y|x=1)$ denote conditional density functions of the received signal $Y$ given the input signal $X = 0$ and $X=1$, respectively.
\end{enumerate}

The 2$\times2$ channel matrix $A$ associated with a discrete memoryless channel (DMC) with input $X$ and output $Z$ is:
\[
A=
\begin{bmatrix}
A_{11} & 1-A_{11}\\
1-A_{22} & A_{22}
\end{bmatrix},
\]

where
\begin{equation}
\label{eq: definition a11}
A_{11} =\int_{y \in \mathbb{H}}^{}\phi_0(y)dy,
\end{equation}
\begin{equation}
\label{eq: definition a22}
A_{22} =\int_{y \in \bar{\mathbb{H}}}^{}\phi_1(y)dy.
\end{equation}

\subsection{Optimal quantizer structure for a given input distribution}
\label{sec:structure}
 Our first contribution is to show that for a given input distribution the optimal binary quantizer with multiple thresholds, specified by a thresholding vector $\textbf{h}^*=(h_1^*,h_2^*,\dots,h_n^*)$ with $h_i^* < h_{i+1}^*$, must satisfy the conditions stated in the Theorem \ref{theorem: 1} below.  
  
\begin{theorem}
\label{theorem: 1}
Let $\textbf{h}^*=(h_1^*,\dots,h_n^*)$ be a thresholding vector of an optimal quantizer $Q^*$, then:
\begin{equation}
\label{eq: relation threshold}
\dfrac{\phi_0(h_i^*)}{\phi_1(h_i^*)} = \dfrac{\phi_0(h_j^*)}{\phi_1(h_j^*)} = r^*,
\end{equation}
for $\forall$ $i, j \in \{1,2,\dots,n\}$ and some optimal constant $r^* > 0$. 
\end{theorem}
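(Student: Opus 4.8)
The plan is to treat $I(X;Z)$ as a function of the threshold vector $\textbf{h}$ for the fixed input distribution $p=(p_0,p_1)$ and to extract the stated relation from the first-order stationarity condition $\partial I/\partial h_i = 0$ at an interior optimum. The first step is to observe that the thresholds enter $I(X;Z)$ only through the two channel entries $A_{11}$ and $A_{22}$: writing the binary entropy as $\mathcal{H}(t) = -t\log t - (1-t)\log(1-t)$ and using $q_0 = p_0 A_{11} + p_1(1-A_{22})$, we have $I = \mathcal{H}(q_0) - p_0\mathcal{H}(A_{11}) - p_1\mathcal{H}(A_{22})$. Differentiating then yields the two ``global'' derivatives
\[
\frac{\partial I}{\partial A_{11}} = p_0\Big(\log\tfrac{q_1}{q_0} - \log\tfrac{1-A_{11}}{A_{11}}\Big), \qquad \frac{\partial I}{\partial A_{22}} = -p_1\Big(\log\tfrac{q_1}{q_0} + \log\tfrac{1-A_{22}}{A_{22}}\Big),
\]
both of which depend only on the aggregate quantities $A_{11},A_{22},q_0,q_1$ and on no individual threshold.

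The second step is the purely local computation of $\partial A_{11}/\partial h_i$ and $\partial A_{22}/\partial h_i$. The key observation is that $h_i$ is the common boundary of exactly the two adjacent cells $H_i$ and $H_{i+1}$, which have opposite parity and hence lie in $\mathbb{H}$ and $\bar{\mathbb{H}}$ in opposite order. Applying the Leibniz rule to the defining integrals of $A_{11}$ and $A_{22}$, only these two cells contribute, and one finds $\partial A_{11}/\partial h_i = s_i\,\phi_0(h_i)$ and $\partial A_{22}/\partial h_i = -s_i\,\phi_1(h_i)$, where $s_i = +1$ if $i$ is odd and $s_i=-1$ if $i$ is even. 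The opposite signs, together with the fact that $\phi_0$ appears for $A_{11}$ and $\phi_1$ for $A_{22}$, are exactly what make the argument close.

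Combining the two steps by the chain rule gives
\[
\frac{\partial I}{\partial h_i} = s_i\Big(\frac{\partial I}{\partial A_{11}}\,\phi_0(h_i) - \frac{\partial I}{\partial A_{22}}\,\phi_1(h_i)\Big).
\]
At an interior optimum every such derivative vanishes; since $s_i\neq 0$ and, at a nondegenerate threshold, $\phi_1(h_i^*)>0$, this forces $\phi_0(h_i^*)/\phi_1(h_i^*) = (\partial I/\partial A_{22})/(\partial I/\partial A_{11})$. The right-hand side is the same for every index $i$ because it is assembled only from the aggregate quantities computed in the first step, so defining $r^*$ to be this common value yields the claimed chain of equalities. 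Positivity $r^*>0$ is then immediate, as $r^*$ equals a ratio of strictly positive densities evaluated at an active threshold.

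The main obstacle I expect is not the algebra but the justification that the stationarity condition may legitimately be applied, i.e., that the optimal thresholds constitute an interior, nondegenerate critical point at which $I$ is differentiable. This requires mild regularity of $\phi_0,\phi_1$ (continuity at each $h_i^*$ so that Leibniz applies) together with an argument that an optimal finite threshold cannot sit where a density vanishes or drift to $\pm\infty$. I would handle the degeneracy by noting that if $\phi_1(h_i^*)=0$, then perturbing $h_i^*$ leaves $A_{22}$ unchanged while it can still move $A_{11}$, which contradicts optimality unless the threshold is redundant, and symmetrically when $\phi_0(h_i^*)=0$ or when $\partial I/\partial A_{11}$ itself vanishes. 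Everything else, in particular the cancellation of the sign $s_i$ and the $i$-independence of the resulting ratio, follows at once from the two derivative computations.
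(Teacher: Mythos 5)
Your proposal is correct and follows essentially the same route as the paper's proof: write $I$ as $\mathcal{H}(q_0)-p_0\mathcal{H}(A_{11})-p_1\mathcal{H}(A_{22})$ with $q_0=p_0A_{11}+p_1(1-A_{22})$, impose first-order stationarity in each $h_i$, and observe that the resulting ratio $\phi_0(h_i^*)/\phi_1(h_i^*)$ equals an expression built only from the aggregates $A_{11},A_{22},q_0$ and hence is independent of $i$. Your treatment is in fact slightly more careful than the paper's in two respects --- you track the parity sign $s_i$ in $\partial A_{11}/\partial h_i$ (which the paper omits, harmlessly, since it cancels in the ratio) and you flag the regularity and nondegeneracy hypotheses needed to invoke stationarity --- but these are refinements of the same argument, not a different one.
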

\begin{proof}
We note that using the optimal thresholding vector $\textbf{h}^*$, the quantization mapping follows 
(\ref{eq:decoding}).  $\textbf{h}^*$ divides $\mathbb{R}$ into $n+1$ contiguous disjoint segments, each maps to either 0 or 1 alternatively.
The discrete memoryless channel in Fig. \ref{fig:setup} has the channel matrix 
\[
A^*=
  \begin{bmatrix}
A_{11} & A_{12}\\
A_{21} & A_{22}
  \end{bmatrix},
\]
and the mutual information can be written as a function of $h$ as: 
\begin{equation}
I(h)=H(Z)-H(Z|X)=H(q_0)-[p_0H(A_{11})+p_1H(A_{22})],
\end{equation}
where for any $w \in [0,1]$, $H(w)=-[w\log(w)+(1-w)\log(1-w)]$ and $q_0=P(Z=0)=p_0A_{11}+p_1A_{21}$.

This is an optimization problem that maximizes $I(h)$.  The theory of optimization requires that an optimal point must satisfy the KKT conditions \cite{boyd2004convex}.  In particular, define the Lagrangian function as:
\begin{equation}
\label{eq:lagrangian}
L(h,\lambda) = I(h) + \sum_{i=1}^{n-1}{\lambda_i (h_i - h_{i+1})},
\end{equation}
then the KKT conditions \cite{boyd2004convex} states that, an optimal point $h^*$ must satisfy:

\begin{equation}
\label{eq:kkt1}
\begin{cases}
\frac{d{L(h, \lambda)}}{dh}|_{h = h^*, \lambda = \lambda^*} = 0, \\
\lambda_i^*(h_i - h_{i+1}) = 0, i = 1, 2, \dots, n-1, \\
\lambda_i^* \ge 0, i = 1, 2, \dots, n-1.
\end{cases}
\end{equation}
Since the structure of the quantizer requires that $h_i < h_{i+1}$, the second and the third conditions in (\ref{eq:kkt1}) together imply that $\lambda_i^* = 0, i = 1, 2, \dots, n-1$. Consequently, from (\ref{eq:lagrangian}) and the first condition in (\ref{eq:kkt1}), we have:

$$\frac{d{L(h, \lambda)}}{dh}|_{h = h^*, \lambda = \lambda^*} = \frac{d{I(h)}}{dh}|_{h = h^*} = 0.$$

By setting the partial derivatives of $I(h)$ with respect to each $h_i$ to zero, we have
\begin{strip}
\vspace{-0.1 in}
\hrule
\begin{eqnarray}
 \frac{\partial I(h)}{\partial h_i} &=&(\log\dfrac{1-q_0}{q_0} )\frac{\partial q_0}{\partial h_i} -p_0(\log \dfrac{1-A_{11}}{A_{11}}) \frac{\partial A_{11}}{\partial h_i}-p_1(\log\dfrac{1-A_{22}}{A_{22}})\frac{\partial A_{22}}{\partial h_i} \nonumber\\
&=&(\log\dfrac{1-q_0}{q_0})(p_0 \frac{\partial A_{11}}{\partial h_i} - p_1 \frac{\partial A_{22}}{\partial h_i} )-p_0(\log\dfrac{1-A_{11}}{A_{11}}) \frac{\partial A_{11}}{\partial h_i} -p_1(\log\dfrac{1-A_{22}}{A_{22}})\frac{\partial A_{22}}{\partial h_i}\label{eq: 2}\\
&=&p_0 \frac{\partial A_{11}}{\partial h_i} (\log \dfrac{1-q_0}{q_0}-\log \dfrac{1-A_{11}}{A_{11}})-p_1 \frac{\partial A_{22}}{\partial h_i} (\log \dfrac{1-q_0}{q_0}+\log \dfrac{1-A_{22}}{A_{22}}) = 0,
\label{eq:derivative}
\end{eqnarray}
\hrule
\vspace{-0.1 in}
\end{strip}
with (\ref{eq: 2}) due to $q_0=p_0A_{11}+p_1A_{21}=p_0A_{11}+p_1(1-A_{22})$.

Since  $\frac{\partial A_{11}}{\partial h_i}=\phi_0(h_i)$ and $\frac{\partial A_{22}}{\partial h_i}=-\phi_1(h_i)$, from (\ref{eq:derivative}), we have:
\begin{equation}
\label{eq: 3}
\dfrac{\phi_0(h^*_i)}{\phi_1(h^*_i)}=-\dfrac{p_1}{p_0}\dfrac{\log\dfrac{1-q_0}{q_0}+\log\dfrac{1-A_{22}}{A_{22}}}{\log\dfrac{1-q_0}{q_0}-\log\dfrac{1-A_{11}}{A_{11}}}= r^*.
\end{equation}
Since (\ref{eq: 3}) holds for $\forall$ $i$, the RHS of (\ref{eq: 3}) equals to some constant $r^* > 0 $ for a quantizer $Q^*$,  Theorem \ref{theorem: 1} follows.
\end{proof}
Suppose the optimal value $r^*$ is given and the equation $r(y) = r^*$ has $m$ solutions: $y_1 < y_2 < \dots <  y_m$. Then,  Theorem \ref{theorem: 1} says that the optimal quantizer must either have its thresholding vector be $(y_1, y_2, \dots, y_m)$ or one of its ordered subsets, e.g., $(h^*_1, h^*_2) = (y_1, y_3)$, or both.   In Theorem \ref{theorem: 1-a} below, we will show that the quantizer whose thresholding vector contains all the solutions of $\dfrac{\phi_0(y)}{\phi_1(y)} = r^*$, will be at least as good as any quantizer whose thresholding vector is a ordered subset of the set of all solutions.  
\begin{theorem}
\label{theorem: 1-a}
Let $y^*_1 < y^*_2< \dots < y^*_n$ be the solutions of $r(y) =\dfrac{\phi_0(y)}{\phi_1(y)}= r^*$ for the optimal constant $r^* > 0$. Let $Q^n_{r^*}$ be the quantizer whose thresholding vector contains all the solutions, i.e., $h^*_i = y^*_i, i = 1, 2, \dots, n$, then for $k < n$, $Q^n_{r^*}$ is at least as good as any quantizer $Q^k_{r^*}$ whose thresholding vector is an ordered subset of the set of $(h^*_1, h^*_2, \dots, h^*_n)$.
\end{theorem}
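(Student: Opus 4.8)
The plan is to reduce the comparison to the geometry of the likelihood ratio $r(y)=\phi_0(y)/\phi_1(y)$. First I would note that, for the fixed input pmf $p$, the mutual information depends on a quantizer only through the pair $(A_{11},A_{22})$, equivalently through $\big(\int_{\mathbb{H}}\phi_0\,dy,\ \int_{\mathbb{H}}\phi_1\,dy\big)$, where $\mathbb{H}$ is the set mapped to $0$. Writing $I=g(A_{11},A_{22})$, I would recall the standard fact that for a fixed input distribution the mutual information is a convex function of the channel transition probabilities \cite{cover2012elements}, so $g$ is convex. As $\mathbb{H}$ ranges over all measurable sets, the attainable pairs $(A_{11},A_{22})$ form a convex region, and the problem becomes maximizing a convex $g$ over that region.

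Next I would identify the maximizer. The same variational computation that produced Theorem \ref{theorem: 1} shows that whether an infinitesimal neighbourhood of $y$ should be assigned to $\mathbb{H}$ is governed by comparing $r(y)$ to a single threshold fixed by $\partial g/\partial A_{11}$ and $\partial g/\partial A_{22}$, and that at optimality this threshold equals exactly $r^*$. Hence the optimal region is a likelihood-ratio level set $\{y:r(y)>r^*\}$ (or its complement). The crucial step is to verify that $Q^n_{r^*}$ realises this level set: the sign of $r(y)-r^*$ can change only at the solutions $y^*_1<\dots<y^*_n$, and the alternating label rule in (\ref{eq:decoding}) flips the assigned symbol at precisely these points, so the set that $Q^n_{r^*}$ maps to a single symbol coincides with $\{y:r(y)>r^*\}$, up to the global relabelling $Z\mapsto 1-Z$ under which $I$ is invariant.

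It then remains to compare $Q^n_{r^*}$ with a subset quantizer $Q^k_{r^*}$. Dropping thresholds merges adjacent elementary segments $(y^*_j,y^*_{j+1})$, on each of which $r(y)-r^*$ keeps a constant sign; consequently the symbol-$0$ region of $Q^k_{r^*}$ differs from the level set on a nonempty union of such segments, each assigned to the side opposite to the one dictated by the likelihood-ratio test. Since $Q^n_{r^*}$ attains the maximizer of the convex $g$ over the attainable region whereas $Q^k_{r^*}$ is merely another attainable point, $g(Q^n_{r^*})\ge g(Q^k_{r^*})$, which is exactly the assertion that $Q^n_{r^*}$ is at least as good as $Q^k_{r^*}$.

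I expect the middle step to be the main obstacle, for two reasons. A naive data-processing argument does not apply here: because neighbouring segments of $Q^n_{r^*}$ carry opposite labels, merging them is not a function of the binary output of $Q^n_{r^*}$, so $Z_k$ is not a degraded version of $Z_n$ and $I(X;Z_k)\le I(X;Z_n)$ cannot be obtained for free. Likewise, a one-segment-at-a-time switching argument is circular because the optimal likelihood-ratio threshold is itself determined self-consistently by the gradient of $g$ at the optimum. I would therefore lean on convexity of $I$ together with the Neyman--Pearson fact that level sets of $r$ are the extreme points of the attainable $(A_{11},A_{22})$ region to pin down $Q^n_{r^*}$ as the maximizer directly; a secondary technical point is transversality --- each $y^*_j$ must be a genuine sign change of $r(y)-r^*$ so that the alternating labels track the level set, the degenerate tangential case being handled separately.
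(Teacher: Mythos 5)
Your argument is correct in substance, and it is worth noting that the paper does not actually prove this theorem: it delegates it entirely to Theorem 1 of the cited work of Kurkoski and Yagi, observing only that for a binary channel the ``hyperplane cut'' in posterior space degenerates to a single point $a^*$ on the posterior, equivalently a single threshold $r^*$ on the likelihood ratio. What you have written is, in effect, a self-contained reconstruction of that cited result specialized to the binary case: convexity of $I(X;Z)$ in the transition probabilities for fixed $p_X$, convexity and compactness of the attainable set of pairs $\bigl(\int_{\mathbb{H}}\phi_0\,dy,\ \int_{\mathbb{H}}\phi_1\,dy\bigr)$ --- which for densities follows from Lyapunov's theorem on the range of a non-atomic vector measure --- and the Neyman--Pearson identification of the extreme points with likelihood-ratio level sets. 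This buys more than the paper's citation does: it shows $Q^n_{r^*}$ is optimal among \emph{all} measurable binary quantizers, not merely better than the subset quantizers $Q^k_{r^*}$, so the stated comparison comes for free once $Q^n_{r^*}$ is identified with the level set $\{y: r(y)>r^*\}$. Your two cautionary remarks are also well placed: the data-processing inequality genuinely cannot be used here, since merging oppositely-labelled segments is not a post-processing of $Z_n$; and the theorem implicitly requires each $y^*_j$ to be a transversal zero of $r(y)-r^*$. If some $y^*_j$ is a tangential zero, the alternating-label quantizer built on all $n$ solutions no longer coincides with the level set and the claim as literally stated fails --- neither the paper nor its reference addresses this degenerate case, so flagging it and excluding it by assumption is the right move. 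The only step you should still write out explicitly is the extreme-point characterization itself (that any maximizer of a convex function over the attainable region can be taken to be a likelihood-ratio test), since that is the load-bearing lemma of the whole argument.
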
  
\begin{proof}
Due to the limited space, we do not present the detailed proof of Theorem \ref{theorem: 1-a}. However, we refer the reader to Theorem 1 in \cite{kurkoski2017single}. Indeed, Theorem 1 in \cite{kurkoski2017single} showed that the optimal quantizer is equivalent to hyper-plane cuts in the space of posterior conditional distribution $p_{y|x}$ and it guarantees that at least one of the globally optimal quantizers has this structure. Due to the channel is binary, a hyper-plane in the posterior distribution $p_{y|x}$ is a point. That said, the optimal threshold vector $h_i$ should be the solutions of 
\begin{equation*}
p_{y|x_1}=\dfrac{\phi_0(y)}{\phi_0(y)+\phi_1(y)}=a^*, 0 \leq a^* \leq 1,
\end{equation*}
 $\forall$ $i=1,2,\dots,n$ that, in turn, is equivalent to $\dfrac{\phi_0(y)}{\phi_1(y)}=r^*$ where $r^*=\dfrac{a}{1-a}$. 
\end{proof}
Theorem \ref{theorem: 1-a} is important in the sense that it provides a concrete approach to find the globally optimal quantizer $Q^*$ by exhausted searching the optimal $r^*$ and using all the solutions of $\dfrac{\phi_0(y)}{\phi_1(y)}=r^*$ to construct the optimal threshold vector $\textbf{h}$. We also note that Theorem \ref{theorem: 1-a} can stand alone without using the proof of Theorem \ref{theorem: 1}, however, \textit{Theorem \ref{theorem: 1} is useful in the sense that it provides an important connection between the optimal thresholds that produces the optimal channel matrix and the optimal input distribution.} For example, the relationship in (\ref{eq: 3}) will be used in the following Theorem \ref{theorem: 2}. 

\begin{theorem}
\label{theorem: 2}
Consider a binary channel with a given input distribution, corresponding to an optimal quantizer $Q^*$, the optimal channel matrix $A^*$ having diagonal entries $A_{11}$ and $A_{22}$ such that  $A_{11} \geq p_0$ and $A_{22} \geq p_1$.  
\end{theorem}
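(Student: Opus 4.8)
The plan is to work entirely from the stationarity characterization already established for $Q^*$ in Theorem \ref{theorem: 1}, and to begin by cutting the work in half. Relabelling $0\leftrightarrow 1$ simultaneously at the input and the output --- replacing $(\phi_0,\phi_1,p_0)$ by $(\phi_1,\phi_0,p_1)$ and $Z$ by $1-Z$ --- leaves $I(X;Z)$ unchanged, carries an optimal quantizer to an optimal quantizer, and interchanges $A_{11}\leftrightarrow A_{22}$ and $p_0\leftrightarrow p_1$. Hence $A_{22}\ge p_1$ is just $A_{11}\ge p_0$ for the relabelled instance, and it suffices to prove $A_{11}\ge p_0$. I would also note that an optimal quantizer is ``useful'', $A_{11}+A_{22}\ge 1$ (otherwise flipping $Z$ strictly increases $I$ whenever $\phi_0\neq\phi_1$), so $A_{21}<A_{11}$; in this orientation the claim, together with its symmetric partner, says exactly that the prior lies inside the channel interval, $A_{21}\le p_0\le A_{11}$.

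Next I would record the structural facts that come for free. Writing $g(w)=\log\frac{1-w}{w}$, condition (\ref{eq: 3}) rearranges to $g(q_0)=\alpha\,g(A_{11})+(1-\alpha)\,g(A_{21})$ with $\alpha=\frac{r^*p_0}{r^*p_0+p_1}\in(0,1)$; since $g$ is strictly decreasing this already puts $q_0$ strictly between $A_{21}$ and $A_{11}$ (equivalently, the sign analysis of (\ref{eq: 3}) needed for $r^*>0$ forces $A_{11}>q_0$ and $A_{22}>1-q_0$). In parallel, the likelihood-ratio structure of the optimum from Theorem \ref{theorem: 1-a}, $\mathbb{H}=\{y:\phi_0(y)/\phi_1(y)\ge r^*\}$, gives the threshold-test bounds $A_{11}\ge r^*A_{21}$ and $1-A_{11}\le r^*A_{22}$, i.e. $r^*\in\big[\tfrac{1-A_{11}}{A_{22}},\,\tfrac{A_{11}}{A_{21}}\big]$.

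The decisive step is to upgrade this sandwich from $q_0$ to the prior $p_0$. Here I would combine the stationarity relation with the averaging identity $q_0=p_0A_{11}+p_1A_{21}$ and the threshold-test bounds, and then invoke \emph{global} maximality of $Q^*$. The intended mechanism is to show that on the part of the (convex) operating curve $A_{21}=\rho(A_{11})$ lying in $\{A_{11}<p_0\}$ the mutual information is still increasing in $A_{11}$, so the maximizer must satisfy $A_{11}\ge p_0$; concretely I would analyse the sign of the first variation $\frac{dI}{dA_{11}}=p_0\big(g(q_0)-g(A_{11})\big)+p_1\rho'\big(g(q_0)-g(A_{21})\big)$ on that set, using $\rho'=1/r^*$ together with the bounds above.

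This last sign argument is where I expect the real difficulty. The obstacle is structural: condition (\ref{eq: 3}) is \emph{independent of the prior} --- its sign analysis only constrains $q_0$ relative to $A_{11}$ and $A_{21}$ --- so $p_0$ cannot enter through stationarity at all, and the prior must instead be brought in through the global (second-order) optimality of $Q^*$ and through the curvature $\rho''\ge 0$ of the operating curve. Controlling that second variation is delicate because $g'(w)=-\frac{1}{w(1-w)}$ reverses the convexity of $g$ at $w=\tfrac12$, so the competing terms carry signs that depend on where $A_{21},q_0,A_{11}$ fall relative to $\tfrac12$, and because the inequality can be tight no slack-based estimate will do. I would therefore expect to close the argument only by exploiting the monotone-likelihood-ratio curvature jointly with the threshold-test bounds, and I would keep in view that making the first variation sign-definite on $\{A_{11}<p_0\}$ may ultimately require an additional regime hypothesis on the channel rather than following from the stationarity and convexity inputs alone.
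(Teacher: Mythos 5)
Your proposal does not reach the stated conclusion. After the symmetry reduction and the two correct structural observations ($A_{11}+A_{22}\ge 1$ for a ``useful'' optimal quantizer, and $q_0$ lying between $A_{21}$ and $A_{11}$ from the sign of (\ref{eq: 3})), the decisive inequality $A_{11}\ge p_0$ is left as a \emph{plan} for a first-/second-variation analysis along the operating curve $A_{21}=\rho(A_{11})$, which you never carry out and which you yourself flag as possibly requiring extra hypotheses on the channel. As written, this is a gap, not a proof.

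For comparison, the paper closes the argument with exactly the two facts you isolate plus one algebraic claim you stop short of: it asserts that the nonnegativity of the right-hand side of (\ref{eq: 3}), i.e.\ (\ref{eq: 4}), is equivalent to $(A_{11}-p_0)(A_{22}-p_1)\ge 0$, which (since $A_{22}-p_1=p_0-A_{21}$) says $p_0$ lies between $A_{21}$ and $A_{11}$; combined with $A_{11}+A_{22}\ge 1=p_0+p_1$ (which you prove the same way the paper does, via the likelihood-ratio description of $\mathbb{H}_r$), the two factors cannot both be negative, so both are nonnegative and the theorem follows. Your remark that the sign analysis of (\ref{eq: 3}) only pins down $q_0$ relative to $A_{11}$ and $A_{21}$ --- and that $q_0=p_0A_{11}+p_1A_{21}$ is \emph{automatically} between them, so the condition by itself carries no information about $p_0$ --- is a substantive observation that puts real pressure on the paper's ``little bit of algebra'' converting (\ref{eq: 4}) into (\ref{eq: 5}). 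But identifying that tension does not discharge your own burden: without either completing the variational argument or supplying an algebraic route from the stationarity condition to a statement about $p_0$ itself, the inequalities $A_{11}\ge p_0$ and $A_{22}\ge p_1$ remain unproven in your write-up.
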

\begin{proof}
Please see the appendix. 
\end{proof}

\vspace{-0.1 in}
\subsection{Optimal input distribution for a given channel matrix}
 For a binary channel having a given channel matrix $A$, the optimal input distribution can be determined in closed-form \cite{moskowitz2010approximations}, \cite{moskowitz2009approximation}. Finally, the maximum of mutual information at the optimal distribution can be written as a function of channel matrix entries \cite{moskowitz2010approximations}, \cite{moskowitz2009approximation},  \cite{silverman1955binary}, \cite{nguyen2018closed}. This result is summarized  in the following Theorem.
\begin{theorem}
\label{theorem: 3}
For a given quantizer $Q$ which corresponds to a given channel matrix $A$, the maximum of mutual information $I(X;Z)$ can be written by the following closed-form
\begin{small}
\begin{eqnarray}
\label{eq: closed form with fix channel matrix}
I(X;Z)_{p_X^*} &=&\log_{2}[2^{-\dfrac{A_{22}H(A_{11})+(A_{11}-1)H(A_{22}))}{A_{11}+A_{22}-1}} \nonumber\\
&  + & 2^{-\dfrac{(A_{22}-1)H(A_{11})+A_{11}H(A_{22}))}{A_{11}+A_{22}-1}}  ],
\end{eqnarray}
\end{small}
where $H(w)=- [w\log(w) +(1-w)\log(1-w)]$. 
\end{theorem}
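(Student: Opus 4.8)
The plan is to treat the fixed pair $(A_{11},A_{22})$ as specifying a binary DMC and to maximize $I(X;Z)$ over the single free parameter $p_0$ (with $p_1=1-p_0$). Writing
\begin{equation*}
I(X;Z)=H(q_0)-\big[p_0H(A_{11})+p_1H(A_{22})\big],\qquad q_0=p_0A_{11}+p_1(1-A_{22}),
\end{equation*}
the objective is concave in $p_0$, since $H(q_0)$ is concave in $q_0$, $q_0$ is affine in $p_0$, and the subtracted term is linear. Hence any stationary point is the global maximum. I would then extract the optimum not by brute-force calculus but through the standard characterization of the capacity-achieving output distribution, because that route reproduces the stated closed form most directly.

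The key fact I would invoke is the KKT/divergence condition for channel capacity: at the optimal input $p_X^*$ with induced output pmf $q^*=(q_0^*,q_1^*)$, the relative entropy from each \emph{used} input row of $A$ to $q^*$ equals the capacity $C\triangleq I(X;Z)_{p_X^*}$. Denoting the two rows by $\mathbf{a}_0=(A_{11},1-A_{11})$ and $\mathbf{a}_1=(1-A_{22},A_{22})$, and assuming the channel is non-degenerate so that both inputs are used (addressed below), this gives
\begin{equation*}
D(\mathbf{a}_x\,\|\,q^*)=\sum_{z}(\mathbf{a}_x)_z\log_2\frac{(\mathbf{a}_x)_z}{q_z^*}=C,\qquad x\in\{0,1\}.
\end{equation*}
Using $-\sum_z(\mathbf{a}_0)_z\log_2(\mathbf{a}_0)_z=H(A_{11})$ and the analogous identity for $\mathbf{a}_1$, this is the linear system $A\,\boldsymbol{\ell}=-(H+C\mathbf{1})$ in the unknowns $\boldsymbol{\ell}=(\log_2 q_0^*,\log_2 q_1^*)^{\top}$, where $H=(H(A_{11}),H(A_{22}))^{\top}$.

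I would then solve this system in closed form. Since $A$ is row-stochastic, $A\mathbf{1}=\mathbf{1}$, so $A^{-1}\mathbf{1}=\mathbf{1}$ and therefore $\boldsymbol{\ell}=-A^{-1}H-C\mathbf{1}$; consequently $2^{\ell_j}=2^{(-A^{-1}H)_j}\,2^{-C}$. Imposing normalization $q_0^*+q_1^*=\sum_j 2^{\ell_j}=1$ eliminates $C$ and yields $2^{C}=\sum_j 2^{(-A^{-1}H)_j}$, i.e. $C=\log_2\!\big(2^{(-A^{-1}H)_0}+2^{(-A^{-1}H)_1}\big)$. Computing $A^{-1}$ with $\det A=A_{11}+A_{22}-1$ and substituting the two entries of $-A^{-1}H$ reproduces exactly the two exponents in (\ref{eq: closed form with fix channel matrix}); this last substitution is routine linear algebra that I would not belabor.

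The main obstacle is justifying the support and invertibility hypotheses underlying the divergence equalities. When $A_{11}+A_{22}=1$ the matrix $A$ is singular, the channel conveys no information, and the formula must be read as the limiting value $C=0$; this case should be separated out, and without loss of generality one may relabel the output so that $A_{11}+A_{22}>1$. I must also confirm that $p_X^*$ assigns positive mass to both inputs (so both rows attain the divergence equality and $\log_2 q_z^*$ is well defined), which holds precisely when $A_{11}+A_{22}\neq 1$. A fully self-contained alternative that avoids the divergence machinery is to set $dI/dp_0=0$ directly, giving $\log_2\frac{q_1^*}{q_0^*}=\frac{H(A_{11})-H(A_{22})}{A_{11}+A_{22}-1}$, from which $q_0^*$, then $p_0^*$, and finally $I(X;Z)_{p_X^*}$ follow by back-substitution — the same answer at the cost of a longer but elementary computation.
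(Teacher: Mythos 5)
Your derivation is correct, and it is worth noting that the paper itself offers no proof of this theorem at all --- it simply defers to \cite{moskowitz2010approximations}, \cite{moskowitz2009approximation}, \cite{silverman1955binary}, \cite{nguyen2018closed}, where the classical binary-asymmetric-channel capacity formula is typically obtained by the ``elementary alternative'' you mention: setting $dI/dp_0=0$ and back-substituting. Your route through the divergence-equality (KKT) characterization of the capacity-achieving output distribution is a genuinely different and cleaner derivation: the identity $A\boldsymbol{\ell}=-(H+C\mathbf{1})$ together with $A^{-1}\mathbf{1}=\mathbf{1}$ and the normalization $\sum_j 2^{\ell_j}=1$ immediately gives $C=\log_2\bigl(2^{(-A^{-1}H)_0}+2^{(-A^{-1}H)_1}\bigr)$, and the final substitution checks out, since
\begin{equation*}
A^{-1}=\frac{1}{A_{11}+A_{22}-1}\begin{bmatrix} A_{22} & A_{11}-1\\ A_{22}-1 & A_{11}\end{bmatrix},
\end{equation*}
whose rows applied to $-H=-(H(A_{11}),H(A_{22}))^{\top}$ reproduce exactly the two exponents in (\ref{eq: closed form with fix channel matrix}). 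What your approach buys is generality (it is the $2\times 2$ instance of the general formula $C=\log_2\sum_j 2^{(-A^{-1}H)_j}$ for invertible square channel matrices, as in \cite{nguyen2018closed}) at the cost of having to justify the support and invertibility hypotheses, which you do: the singular case $A_{11}+A_{22}=1$ gives identical rows and $C=0$ (the formula degenerates to $0/0$ there), and for distinct rows both inputs must carry positive mass since otherwise $I=0<C$. The only step you wave at rather than execute is the concluding matrix--vector multiplication, but as you say it is routine. I see no gap.
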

\begin{proof}
Please see the detailed proof in \cite{moskowitz2010approximations}, \cite{moskowitz2009approximation},  \cite{silverman1955binary}, \cite{nguyen2018closed}. 
\end{proof}
\begin{theorem}
\label{theorem: 4}
For a given quantizer $Q$ which corresponds to a given channel matrix $A$, the optimal input distribution $p_0^*$ and $p_1^*$ are bounded by:
\begin{equation}
0.3679=\dfrac{1}{e}< q_0^*,q_1^*<1-\dfrac{1}{e}=0.6321.
\end{equation}
\end{theorem}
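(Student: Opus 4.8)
We bound the optimal input probabilities $p_0^*,p_1^*$ named in the statement. Fix the channel matrix $A$. Writing $q_0=p_0A_{11}+p_1(1-A_{22})$, the mutual information $I=H(q_0)-p_0H(A_{11})-p_1H(A_{22})$ is concave in $p_0\in[0,1]$, since $H(q_0)$ is concave in the affine quantity $q_0$ and the remaining terms are linear in $p_0$; moreover $I=0$ at $p_0\in\{0,1\}$ and $I>0$ in the interior whenever $A_{11}+A_{22}\neq 1$ (the degenerate case $A_{11}+A_{22}=1$ makes the two rows of $A$ equal, so $I\equiv 0$, and is excluded). Hence the maximizer $p_0^*$ is interior and is the unique root of $dI/dp_0=0$. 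Because relabelling the inputs $0\leftrightarrow 1$ swaps $(A_{11},p_0^*)\leftrightarrow(A_{22},p_1^*)$ while leaving $I$ unchanged, it suffices to establish the single upper bound $p_0^*<1-\tfrac1e$ for \emph{every} channel: applying it to the relabelled channel yields $p_1^*<1-\tfrac1e$, i.e. $p_0^*>\tfrac1e$, and the two bounds together give the theorem.

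First I would recast the stationarity condition. Let $P_0=(A_{11},1-A_{11})$ and $P_1=(1-A_{22},A_{22})$ be the rows of $A$ and let $Q=(q_0,q_1)$ be the output law. A direct computation, identical in spirit to (\ref{eq:derivative})--(\ref{eq: 3}), gives
\[
\frac{dI}{dp_0}=D(P_0\,\|\,Q)-D(P_1\,\|\,Q),
\]
where $D(\cdot\,\|\,\cdot)$ is the Kullback--Leibler divergence. By concavity, $p_0^*<1-\tfrac1e$ is equivalent to $dI/dp_0<0$ evaluated at $(p_0,p_1)=(1-\tfrac1e,\tfrac1e)$, i.e. to $D(P_0\,\|\,Q)<D(P_1\,\|\,Q)$ at that input. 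Expanding both divergences and using $P_0-P_1=(A_{11}+A_{22}-1)(1,-1)$, this reduces to the scalar inequality
\[
(A_{11}+A_{22}-1)\log\frac{q_0}{q_1}>H(A_{22})-H(A_{11}),\qquad q_0=\Big(1-\tfrac1e\Big)A_{11}+\tfrac1e\,(1-A_{22}),
\]
which no longer contains $p_0^*$.

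The main obstacle is verifying this last inequality over all $(A_{11},A_{22})\in[0,1]^2$. The plan is to treat it as a two-variable calculus problem: fix $A_{11}$ and show the difference of the two sides is monotone (or convex) in $A_{22}$, and symmetrically in $A_{11}$, so that its infimum is attained on the boundary of the square, namely on the Z-channel locus $A_{11}=1$ or $A_{22}=1$. On that locus the statement collapses to one dimension; for instance at $A_{11}=1$ it reads $A_{22}\log\frac{1-A_{22}/e}{A_{22}/e}\ge H(A_{22})$, and a short expansion shows the two sides coincide in the limit $A_{22}\to 0$ and are strictly ordered otherwise. This simultaneously identifies $1-\tfrac1e$ as the exact, unattained supremum of $p_0^*$, approached by the Z-channel as $A_{22}\to 0$, so the inequalities in the theorem are strict. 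I expect the boundary reduction -- pinning down the signs of the one-variable derivatives that force the extremum onto the edges of $[0,1]^2$ -- to be the most delicate step; should that monotonicity prove awkward, an alternative is to invoke the binary-input specialization of the Majani--Rumsey bound, which yields precisely $p_0^*,p_1^*\in(\tfrac1e,\,1-\tfrac1e)$.
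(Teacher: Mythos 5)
The paper offers no proof of this theorem at all: it simply cites Theorem~1 of Majani and Rumsey \cite{majani1991two}. (Note also that the displayed bound in the statement is written with $q_0^*,q_1^*$ but, as the surrounding text and the cited result make clear, it is the \emph{input} probabilities $p_0^*,p_1^*$ that are meant; you interpreted this correctly.) Your preliminary reductions are all sound and are essentially the standard route to this bound: concavity of $I$ in $p_0$, the identity $dI/dp_0=D(P_0\|Q)-D(P_1\|Q)$, the symmetry argument reducing two bounds to one, and the correct algebraic reduction of the condition $dI/dp_0\big|_{p_0=1-1/e}<0$ to the scalar inequality $(A_{11}+A_{22}-1)\log\frac{q_0}{q_1}>H(A_{22})-H(A_{11})$ on $[0,1]^2$. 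The limiting computation on the Z-channel edge, showing the two sides agree to first order as $A_{22}\to 0$, is also correct and correctly identifies $1-1/e$ as an unattained supremum.

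The genuine gap is that the decisive step is never carried out: you assert that the infimum of the difference of the two sides over $(A_{11},A_{22})\in[0,1]^2$ is attained on the boundary, but you give no monotonicity or convexity argument forcing the extremum onto the edges --- you explicitly label this "the plan" and concede it may "prove awkward." That boundary reduction is precisely where all the work of the Majani--Rumsey theorem lives; without it the argument is a correct reformulation of the problem rather than a proof. Your stated fallback --- invoking the binary-input Majani--Rumsey bound directly --- is exactly what the paper does, so as a practical matter your proposal lands in the same place as the paper's "proof," but as a self-contained derivation it is incomplete at its central step.
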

\begin{proof}
Please see Theorem 1 in \cite{majani1991two}. 
\end{proof}

\section{Finding channel capacity over both input distribution and threshold vector variables}
\label{sec: solution}
Theorem \ref{theorem: 1} and Theorem \ref{theorem: 1-a} state that an optimal quantizer can be found by exhaustive searching the optimal value $r^*$ and use all the solutions of  $\dfrac{\phi_0(y)}{\phi_1(y)} = r^*$ to construct the optimal thresholding vector $\textbf{h}^*$. The mutual information $I(X;Z)$, therefore, becomes a function of variable $r$. Now, for a given $r$, define $\mathbb{H}_r = \{y: \dfrac{\phi_0(y)}{\phi_1(y)} > r \}$, then
$$ \mathbb{H}_r =  \{ (-\infty, h_1) \ \cup [h_2, h_3) \cup \dots \cup [h_n, +\infty) \}.$$

Similarly,  let $\bar{\mathbb{H}}_r = \{y:  \dfrac{\phi_0(y)}{\phi_1(y)} \leq  r \}$, then
$$\bar{\mathbb{H}}_r = \mathbb{R} \setminus \mathbb{H}_r = \{ [h_1, h_2 ) \cup [h_3, h_4) \cup \dots \cup [h_{n-1}, h_n) \}.$$

The sets $\mathbb{H}_r$ and $\bar{\mathbb{H}}_r$ together specify a binary quantizer that maps $y$ to $z \in \{0,1\}$, depending on whether $y$ belongs to $\mathbb{H}_r$ or $\bar{\mathbb{H}}_r$. Without the loss of generality, suppose we use the following quantizer:
\begin{equation}
z = 
\begin{cases}
0 & y \in \mathbb{H}_r, \\
1 & y \in \bar{\mathbb{H}}_r,
\end{cases}
\end{equation}
then the channel matrix of the overall DMC is:
$$\begin{array}{cc}
A = \begin{bmatrix} f(r) & 1-f(r) \\
1-g(r)  &  g(r)
\end{bmatrix},
\end{array}$$
where $f(r) \stackrel{\triangle}{=} p(z=0|x=0)=A_{11}$ and $g(r) \stackrel{\triangle}{=} p(z = 1|x=1) =A_{22}$.  $f(r)$ and $g(r)$ can be written in terms of $\phi_0(y)$ and $\phi_1(y)$ as:

\begin{footnotesize}
\begin{equation}
\label{eq: construct fr}
f(r)\!=\!\int_{y \in \mathbb{H}_r}\phi_0(y)dy\!=\!\int_{-\infty}^{h_1}\phi_0(y)dy \!+\! \int_{h_2}^{h_3}\phi_0(y)dy \!+\! \dots \!+\! \int_{h_{n}}^{+\infty}\phi_0(y)dy,
\end{equation}

\begin{equation}
\label{eq: construct gr}
g(r)\!=\!\int_{y \in \bar{\mathbb{H}}_r}\phi_1(y)dy\!=\!\int_{h_1}^{h_2}\phi_1(y)dy \!+\! \int_{h_3}^{h_4}\phi_1(y)dy \!+\! \dots \!+\! \int_{h_{n-1}}^{h_{n}}\phi_1(y)dy.
\end{equation}
\end{footnotesize}

Using Theorem \ref{theorem: 3}, the optimal of mutual information in Eq. (\ref{eq: closed form with fix channel matrix}) is:
\begin{small}
\begin{eqnarray}
\label{eq: closed form with fix channel matrix 2}
I(X;Z)_{(p_X^*,r)} &=&\log_{2}[2^{-\dfrac{g(r)H(f(r))+(f(r)-1)H(g(r)))}{f(r)+g(r)-1}} \nonumber\\
&  + & 2^{-\dfrac{(g(r)-1)H(f(r))+f(r)H(g(r)))}{f(r)+g(r)-1}}  ].
\end{eqnarray}
\end{small}

\textbf{Linear time complexity algorithm:} using (\ref{eq: closed form with fix channel matrix 2}), an exhausted searching over $r$ can be applied to find the optimal of mutual information for both input distribution and threshold quantization. We note that $f(r)$ and $g(r)$ can be computed using (\ref{eq: construct fr}) and (\ref{eq: construct gr}) where $\textbf{h}=\{h_1,h_2,\dots,h_n\}$ are well defined as the solutions of $\dfrac{\phi_0(y)}{\phi_1(y)} = r$.

\textbf{Narrow down the searching area:} 
\begin{theorem}
\label{theorem: 5}
For an arbitrary binary channel, suppose that the optimal of mutual information $I(X;Z)$ over both input distribution and quantizer is achieved at the optimal quantizer $Q^*$ which generates optimal channel matrix $A^*$, then $A_{11}^* > \dfrac{1}{e}$ and $A_{22}^* > \dfrac{1}{e}$.
\end{theorem}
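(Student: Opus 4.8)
The plan is to exploit the fact that a global joint maximizer over the pair $(p_X, Q)$ must also be optimal in each coordinate separately, which allows Theorem \ref{theorem: 2} and Theorem \ref{theorem: 4} to be invoked simultaneously at the same operating point $A^*$. Holding the optimal input distribution $p^*=(p_0,p_1)$ fixed, the quantizer $Q^*$ is itself optimal, so Theorem \ref{theorem: 2} gives $A_{11}\ge p_0$ and $A_{22}\ge p_1$. Holding the optimal channel matrix $A^*$ fixed, the distribution $p^*$ is capacity-achieving for that matrix, so Theorem \ref{theorem: 4} gives $\frac{1}{e}<q_0,q_1<1-\frac{1}{e}$, where $q_0=P(Z=0)$ and $q_1=P(Z=1)$. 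The whole argument then reduces to converting these two families of inequalities into the desired lower bounds on the diagonal entries.

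First I would write the output marginals in terms of the channel and input, $q_0 = p_0 A_{11} + p_1(1-A_{22})$ and $q_1 = p_0(1-A_{11}) + p_1 A_{22}$. The key step is to establish $q_0 \le A_{11}$. Using $A_{22}\ge p_1$, equivalently $1-A_{22}\le p_0$, I obtain $q_0 \le p_0 A_{11} + p_1 p_0$; then using $p_0 \le A_{11}$ on the second term gives $q_0 \le p_0 A_{11} + p_1 A_{11} = A_{11}(p_0+p_1)=A_{11}$. Combining with $q_0>\frac{1}{e}$ from Theorem \ref{theorem: 4} yields $A_{11}\ge q_0 > \frac{1}{e}$. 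By the symmetric computation with the roles of the two inputs interchanged---using $A_{11}\ge p_0$, i.e. $1-A_{11}\le p_1$, and then $p_1\le A_{22}$---I would show $q_1 \le A_{22}$, hence $A_{22}\ge q_1 > \frac{1}{e}$, which completes the proof.

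The part requiring the most care is not the algebra but the justification that both conditional results apply at the same point: Theorem \ref{theorem: 2} fixes the input and optimizes the quantizer, while Theorem \ref{theorem: 4} fixes the channel and optimizes the input, so I must argue that a global joint optimum is simultaneously a fixed point of both one-sided optimizations (a necessary condition for joint maximality). I would also flag the degenerate case $A_{11}+A_{22}=1$, in which the two rows of $A$ coincide, $Z$ is independent of $X$, and the Majani--Rumsey bound of Theorem \ref{theorem: 4} need not hold; this case is excluded because the joint optimum of any non-trivial channel has strictly positive mutual information, so $A_{11}+A_{22}\neq 1$ and (after relabelling $Z$ if needed) $A_{11}+A_{22}>1$. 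Once the intermediate target $q_0\le A_{11}$ is identified, the remaining work---chaining $1-A_{22}\le p_0$ with $p_0\le A_{11}$---is elementary, and the strict inequality $A_{11}>\frac{1}{e}$ is inherited directly from the strict bound $q_0>\frac{1}{e}$.
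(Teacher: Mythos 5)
Your proposal is correct and sits at the same top level as the paper's argument: the paper's entire proof of Theorem~\ref{theorem: 5} is the single sentence ``Combining Theorem~\ref{theorem: 2} and Theorem~\ref{theorem: 4}, \dots{} we should have $A^*_{11} > \frac{1}{e}$ and $A^*_{22} > \frac{1}{e}$,'' and you combine exactly those two results. Where you genuinely diverge is in \emph{how} the combination is carried out. The paper's Theorem~\ref{theorem: 4} is stated ambiguously --- its prose refers to the optimal \emph{input} probabilities $p_0^*, p_1^*$ while its displayed inequality bounds the \emph{output} marginals $q_0^*, q_1^*$ --- and the paper's one-line proof presumably intends the direct chain $A_{11} \ge p_0 > \frac{1}{e}$ under the input-probability reading. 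You instead take the displayed $q$-based bound at face value and supply the bridging inequality $q_0 \le A_{11}$ (and symmetrically $q_1 \le A_{22}$), obtained by chaining $1-A_{22} \le p_0$ with $p_0 \le A_{11}$; the algebra is correct and has the advantage of making the proof work under either reading of Theorem~\ref{theorem: 4}. Two further points in your write-up improve on the paper: you explicitly justify why Theorems~\ref{theorem: 2} and~\ref{theorem: 4} may be invoked simultaneously at the joint optimum (a global maximizer is a fixed point of each one-sided optimization), which the paper leaves unsaid; and you correctly note that strictness of the final bound must come from the strict inequality in Theorem~\ref{theorem: 4}, since Theorem~\ref{theorem: 2} only yields the non-strict $A_{11} \ge p_0$, $A_{22} \ge p_1$. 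Your flag on the degenerate case $A_{11}+A_{22}=1$ is reasonable, though the paper's appendix proof of Theorem~\ref{theorem: 2} already establishes $A_{11}+A_{22} \ge 1$ for the optimal quantizer.
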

\begin{proof}
Combining Theorem \ref{theorem: 2} and Theorem \ref{theorem: 4}, for an optimal quantizer, we should have $A^*_{11} > \dfrac{1}{e}$ and $A^*_{22} > \dfrac{1}{e}$.
\end{proof}

\begin{theorem}
\label{theorem: 6}
$f(r)$ in (\ref{eq: construct fr}) is a monotonic decreasing function and $g(r)$ in (\ref{eq: construct gr}) is monotonic increasing function with variable $r$. 
\end{theorem}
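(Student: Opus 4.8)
The plan is to exploit the fact that, up to the sign convention, $f(r)$ and $g(r)$ are simply cumulative distribution functions of the scalar likelihood-ratio random variable $r(Y)=\phi_0(Y)/\phi_1(Y)$ under the two input hypotheses. Viewed this way, the monotonicity is inherited from the nesting of the super-level sets and the nonnegativity of the densities, so no delicate calculus involving the moving thresholds $h_i$ is needed.

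First I would establish the nesting of the defining sets. For two values $r_1 < r_2$, the requirement $\phi_0(y)/\phi_1(y) > r_2$ is strictly more restrictive than $\phi_0(y)/\phi_1(y) > r_1$, so every $y$ satisfying the former satisfies the latter; hence $\mathbb{H}_{r_2} \subseteq \mathbb{H}_{r_1}$, and taking complements in $\mathbb{R}$ gives $\bar{\mathbb{H}}_{r_1} \subseteq \bar{\mathbb{H}}_{r_2}$. This single pair of inclusions drives both claims and is valid regardless of the shape or number of the thresholds.

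Next I would invoke monotonicity of the integral over nested domains with a nonnegative integrand. Since $\phi_0(y) \ge 0$, integrating over the smaller set yields $f(r_2)=\int_{\mathbb{H}_{r_2}}\phi_0 \le \int_{\mathbb{H}_{r_1}}\phi_0 = f(r_1)$, so $f$ is non-increasing; symmetrically, since $\phi_1(y) \ge 0$ and $\bar{\mathbb{H}}_{r_1} \subseteq \bar{\mathbb{H}}_{r_2}$, we get $g(r_1)=\int_{\bar{\mathbb{H}}_{r_1}}\phi_1 \le \int_{\bar{\mathbb{H}}_{r_2}}\phi_1 = g(r_2)$, so $g$ is non-decreasing. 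Equivalently, writing $P_0$ and $P_1$ for the laws of $Y$ given $X=0$ and $X=1$, one has $f(r)=P_0\big(r(Y)>r\big)$ and $g(r)=P_1\big(r(Y)\le r\big)$, i.e. $f$ is a complementary CDF and $g$ is a CDF of $r(Y)$, which makes the respective monotonicities automatic.

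The only real subtlety I would take care to address is the distinction between strict and non-strict monotonicity. The inclusion argument delivers monotonicity in the non-strict sense with no extra hypotheses; strict monotonicity would additionally require that the shell $\mathbb{H}_{r_1}\setminus \mathbb{H}_{r_2}=\{y: r_1 < r(y) \le r_2\}$ carry positive $\phi_0$-mass (and likewise positive $\phi_1$-mass for $g$), which holds whenever the conditional densities do not vanish on the region swept out as the thresholds move. I would therefore state the conclusion at the level of non-strict monotonicity, which is precisely what the subsequent exhaustive search over $r$ relies on.
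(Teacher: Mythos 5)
Your proof is correct and follows essentially the same route as the paper's: the paper likewise observes that $f(r)$ is the integral of $\phi_0$ over the super-level set $\{y:\phi_0(y)/\phi_1(y)>r\}$, which shrinks as $r$ grows, and argues symmetrically for $g$. Your write-up merely makes explicit the nesting $\mathbb{H}_{r_2}\subseteq\mathbb{H}_{r_1}$ and the non-strict versus strict distinction that the paper's one-line argument leaves implicit.
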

\begin{proof}
Please see our appendix.
\end{proof}

From Theorem \ref{theorem: 5}, the entries $A_{11}$ and $A_{22}$ should be satisfy $A_{11} > \dfrac{1}{e}$ and $A_{22} > \dfrac{1}{e}$. Thus, we can narrow down the searching range by limiting $f(r)$ and $g(r)$ such that $f(r) >  \dfrac{1}{e}$ and $g(r) > \dfrac{1}{e}$. Due to the monotonic increasing/decreasing of $f(r)$ and $g(r)$, we can find the upper bound and lower bound of $r$ by solving two equations $f(r)=1/e$ and $g(r)=1/e$. Using bisection search, finding the solutions of $f(r)=1/e$ and $g(r)=1/e$ takes the time complexity of $O(\log M)$ where $M=\dfrac{1}{\epsilon}$ and $\epsilon$ is the resolution/accuracy of the solution.

\section{Numerical Results}
\label{sec: simulations}
In this section, we find the optimal of mutual information $I(X;Z)$ for a channel having $\phi_0=N(\mu_0=-1,\sigma_0=6)$ and $\phi_1=N(\mu_1=1,\sigma_1=5)$. Due to $f(r) > 1/e$ and $g(r)> 1/e$, we can limit the searching area of  $r \in [0.8;9.1]$. Next, an exhaustive searching with the resolution $\epsilon=0.01$ over $[0.8;9.1]$ is performed.  Fig. \ref{fig: 6} illustrates the function of $I(X;Z)_{(p_X^*,r)}$ in Eq. (\ref{eq: closed form with fix channel matrix 2}) using variable $r$. From our simulation, the optimal of $I(X;Z)$ for both input variable and threshold variable is $I(X;Z)^*=0.7249$ at $r^*=1.36$. 

   \begin{figure}
  \centering
  \includegraphics[width=3 in]{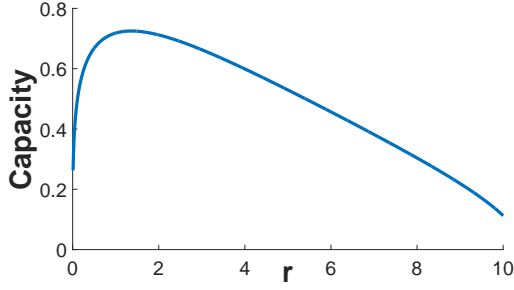}\\
  \caption{$I(X;Z)_{(p_X^*,r)}$ as a function of $r$.}\label{fig: 6}
 \end{figure}

\section{Conclusion}
In this paper, we provide a linear time complexity searching procedure to find the global optimal of mutual information between input and quantized output over both input distribution and quantizer variables. Based on the properties of the optimal quantizer and the optimal input distribution, we reduced the searching range that finally results in a faster implementation. Both theoretical and numerical results are provided to justify our method. 

\appendix
\vspace{-0.1 in}
\subsection{Proof of Theorem 2}
Due to both distribution functions $\phi_0(y)$ and $\phi_1(y)$ are positive, thus, $\dfrac{\phi_0(h_i^*)}{\phi_1(h_i^*)} \geq 0$. From (\ref{eq: 3}),  we have:
\begin{equation}
\label{eq: 4}
-\dfrac{p_1}{p_0}\dfrac{\log\dfrac{1-q_0}{q_0}+\log\dfrac{1-A_{22}}{A_{22}}}{\log\dfrac{1-q_0}{q_0}-\log\dfrac{1-A_{11}}{A_{11}}} \geq 0.
\end{equation}
Using a little bit of algebra, (\ref{eq: 4}) is equivalent to 
\begin{equation}
\label{eq: 5}
(A_{11}-p_0)(A_{22}-p_1) \geq 0.
\end{equation}

Next, we show that $A_{11}+A_{22} \geq 1 =p_0+p_1$. Indeed, $f(r)$ and $g(r)$ represent the quantized bits ``0" and ``1" which correspond to the areas of $\dfrac{\phi_0(y)}{\phi_1(y)} > r$ and $\dfrac{\phi_0(y)}{\phi_1(y)} \leq  r$, respectively. Let $\mathbb{H}_r=\{ y| \dfrac{\phi_0(y)}{\phi_1(y)} > r \}$ and $\bar{\mathbb{H}}_r=\{ y| \dfrac{\phi_0(y)}{\phi_1(y)} \leq r \}$. 

We consider two possible cases: $r \leq 1$ and $r > 1 $.  In both cases, we will show that $A_{11}+A_{22}=f(r) + g(r) \geq 1$.

$\bullet$ If $r \leq 1$ then $\phi_0(y) \leq \phi_1(y)$ for $\forall$ $y \in \bar{\mathbb{H}}_r$.
Therefore,
\begin{eqnarray}
f(r)+g(r)&=& \int_{y \in \mathbb{H}_r} \phi_0(y)dy + \int_{y \in \bar{\mathbb{H}}_r} \phi_1(y)dy\\
& \geq & \int_{y \in \mathbb{H}_r} \phi_0(y)dy + \int_{y \in \bar{\mathbb{H}}_r} \phi_0(y)dy\\
&=&1. \label{eq: 56}
\end{eqnarray}

$\bullet$ If $r > 1$ then $\phi_0(y) > \phi_1(y)$ for $\forall$ $y \in \mathbb{H}_r$.
Therefore,
\begin{eqnarray}
f(r)+g(r)&=& \int_{y \in \mathbb{H}_r} \phi_0(y)dy + \int_{y \in \bar{\mathbb{H}}_r} \phi_1(y)dy\\
& > & \int_{y \in \mathbb{H}_r} \phi_1(y)dy + \int_{y \in \bar{\mathbb{H}}_r} \phi_1(y)dy\\
&=&1. \label{eq: 55}
\end{eqnarray}

Therefore, $A_{11}+A_{22} \geq 1 =p_0+p_1$. Thus,  (\ref{eq: 5}) is equivalent to $A_{11} \geq p_0$ and $A_{22} \geq p_1$.

\subsection{Proof of Theorem 5}
Due to $f(r)$ represents the quantized bit ``0" which is the area of $\phi_0(y)$ where $\dfrac{\phi_0(y)}{\phi_1(y)} > r$.  Therefore, if $r$ is increasing, $f(r)$ is obviously decreasing or $f'(r) 
\leq 0$.  A similar proof can be established for $g(r)$.

\
\bibliographystyle{unsrt}
\bibliography{sample}

\end{document}